\newtheorem{theorem}{\textbf{Theorem}}
\newtheorem{lemma}{\textbf{Lemma}}
\newtheorem{definition}{\textbf{Definition}}
\newtheorem{remark}{\textbf{Remark}}
\newtheorem{assumption}{\textbf{Assumption}}
\newacronym{rto}{RTO}{Real-Time Optimization}
\newacronym{gp}{GP}{Gaussian Processe}
\newacronym{kkt}{KKT}{Karush Kuhn Tucker}
\newacronym{dfo}{DFO}{Derivative-Free Optimization}
\newacronym{nlp}{NLP}{Nonlinear Program}
\newacronym{wor}{WOR}{Williams-Otto reactor}
\newacronym{rkhs}{RKHS}{reproducing kernel Hilbert space}
\newcommand{\R}{\mathbb{R}}
\newcommand{\Rnx}{\mathbb{R}^{n_x}}
\newcommand{\Rnxl}{\mathbb{R}^{{(n_x+1)} \times l}}
\newcommand{\mbb}[1]{\mathbb #1}
\newcommand{\eBox}{$\hfill\square$}
\newcommand{\rv}[1]{\mathsf{#1}}
\newcommand{\ProbMeasure}{\mathbb{P}}
\newcommand{\Outcomes}{\mathcal{D}}
\newcommand{\Ltwospace}[1]{L^2(\Outcomes, \ProbMeasure; \mathbb{R}^{#1})}
\newcounter{parentalgorithm}
\title{\LARGE \bf
Convergence Certificate for Stochastic Derivative-Free Trust-Region Methods based on Gaussian Processes}
\author{Harsh A. Shukla$^{1}$, Tafarel de Avila Ferreira$^{2}$, Timm Faulwasser$^{3}$,  \\  Dominique Bonvin$^{1}$ and Colin N. Jones$^{1}$ 
	\thanks{The first and fifth author   acknowledge funding from the People Programme (Marie Curie Actions) of the EU Seventh Framework Programme (FP7/2007-2013) under REA grant agreement no 607957 (TEMPO).}
	\thanks{$^{1}$ Harsh A. Shukla, Dominique Bonvin, and Colin N. Jones are with Laboratoire d'Automatique, \'{E}cole Polytechnique F\'{e}d\'{e}rale de Lausanne, CH-1015 Lausanne, Switzerland.
		{\tt\small \{harsh.shukla, dominque.bonvin, colin.jones\}@epfl.ch}}%
	\thanks{$^{2}$ Tafarel de Avila Ferreira is with the Group of Energy Materials, \'{E}cole Polytechnique F\'{e}d\'{e}rale de Lausanne, CH-1951 Sion, Switzerland. \color{black}
		{\tt\small tafarel.deavilaferreira@epfl.ch}} %
	\thanks{$^{3}$ Timm Faulwasser is with the Institute for Energy Systems, Energy Efficiency and Energy Economics, TU Dortmund University, 44227 Dortmund, Germany. \color{black}
		{\tt\small timm.faulwasser@ieee.org}} %
}
\begin{document}

\maketitle
 \thispagestyle{empty}
\pagestyle{empty}

\begin{abstract}
In many machine learning applications, one wants to learn the unknown objective and constraint functions of an optimization problem from available data and then apply some technique to attain a local optimizer of the learned model. This work considers  \glspl{gp} as global surrogate models and utilizes them in conjunction with derivative-free trust-region methods. It is well known that derivative-free trust-region methods converge globally---provided the surrogate model is probabilistically fully linear. We prove that \glspl{gp} are indeed probabilistically fully linear, thus resulting in  fast (compared to linear or quadratic local surrogate models) and global convergence.  We draw upon the optimization of a chemical reactor  to demonstrate the efficiency of \gls{gp}-based trust-region methods.
\end{abstract}


\section{INTRODUCTION}
\label{sec:introduction}

Increased computational power, ubiquitous availability of computational resources and improved algorithms have driven steady research interest in real-time optimization.  However, in essentially all real-world applications, accurate plant models are not available. Hence,  the issues surrounding uncertain models have been explored in different settings ranging from robust and stochastic optimization~\cite{sahinidis2004971} via real-time optimization~\cite{bonvin2017}, data-driven control~\cite{Ellis2014}  to machine learning~\cite{NING2019434}.

The set of \gls{dfo} trust-region methods comprises established tools to optimize unknown---or expensive to evaluate---objectives~\cite{Conn:2009:IDO:1508119}. The pivotal idea is the use of a local surrogate model, built at each iteration by evaluating the objective at a number of sample points within the trust region. 
Probabilistic derivative-free trust-region methods rely on randomized surrogate models~\cite{doi:10.1137/130915984, Larson:2016:SDO:2953787.2953863}. The key advantage of using a probabilistic method is its ability to capture uncertainties efficiently. This is indeed useful for noisy objectives and/or inaccurate models.
However, the key bottleneck of deterministic and probabilistic derivative-free trust-region methods alike is twofold: (i) ensuring the quality of the surrogate model, and (ii) guaranteeing a sufficiently large domain of validity. The former can be achieved via complicated procedures for sample-set maintenance~\cite{Conn:2009:IDO:1508119}, while the latter calls for global surrogate models. 

The convergence of trust-region methods relies on the accuracy of the surrogate model within the trust region. Intuitively speaking, the convergence mechanism increases the sampling of the unknown function and decreases the trust-region radius until the local surrogate model is sufficiently accurate in zeroth- and first-order compared to the unknown function. This accuracy, which is defined as ``full linearity", helps move in a decent direction. Global convergence of derivative-free trust-region methods for deterministic and stochastic version is described in~\cite{Conn:2009:IDO:1508119} and~\cite{doi:10.1137/130915984}, respectively. In this context, the main challenge is the construction of a surrogate model by performing as few plant evaluations as possible. Hence, if full-linearity  can be certified, a global surrogate model is usually preferred over local ones. 

At the same time, there is a recent and steadily growing interest in machine learning techniques in computer science as well as in systems and control. This spans $\{$supervised, reinforcement$\}$ learning  and data-driven function approximations by  deep neural networks~\cite{Lucia18a}  and \glspl{gp}~\cite{Hewing18a,Rasmussen:2005:GPM:1162254}. There exists  a body of literature on using machine learning for optimization, e.g.,~\cite{Rasmussen:2005:GPM:1162254, Shalev-Shwartz:2008}.  In some cases, it is possible to guarantee convergence to the global minimum of unknown functions. 

Since \glspl{gp} are excellent candidates to be used as global surrogate models, it is natural to combine them with derivative-free trust-region methods. The idea, which dates back to Conn's book~\cite{Conn:2009:IDO:1508119}, was analyzed empirically in~\cite{2017arXiv170304156A}. It is also used in \gls{rto}, where the aim is to solve a steady-state optimization problems or to optimize repeated batch operation~\cite{Ferreira2018RealTimeOO,Chanona19a}. 
However, to the best of our knowledge, it is yet to be shown whether \glspl{gp} can be certified to be fully linear, which is key for guaranteeing global convergence of derivative-free trust-region methods. 

The main result of this paper is to prove that \glspl{gp} can satisfy the fully-linearity property. We present the necessary procedure for this certification. Furthermore, using \gls{gp} as a global surrogate model leads to fewer trust-region iterations, implying fewer plant evaluations. This has a clear advantage over local surrogate models and other empirical local model correction methods as illustrated in the numerical section.

The remainder is structured as follows: We formulate the problem, discuss a probabilistic derivative-free trust-region method, introduce \glspl{gp}, and explain the need for improvement in Section~\ref{sec:problem_preliminaries}. A certification proof for \glspl{gp} to be fully-linear surrogate model is provided in Section~\ref{sec:proposed_algorithm}. We illustrate the effectiveness and advantage of the proposed work on a numerical case study in Section~\ref{sec:case_studies}, while Section~\ref{sec:conclusions} concludes the paper.


\section{PRELIMINARIES}
\label{sec:problem_preliminaries}
\subsection{Problem statement}

We consider the \gls{nlp}
\begin{equation} \label{eq:NLP_General_unconstrained}
\underset{x \in \Rnx} {\min} ~ f({x}),
\end{equation}
with the unknown objective $f: \Rnx \to \R$ and the decision variables $x \in \Rnx$.
A solution to~\eqref{eq:NLP_General_unconstrained} can be computed using \gls{dfo} by  sampling the unknown function $ f $ and building a surrogate model. The samples are subject to additive noise and therefore their distribution can be written as: 
\begin{equation} \label{eq:f_sampling}
\rv{z} = f(x) + \rv{\nu} \quad \text{where } \nu \sim \mathcal{N}\left(0, \sigma^2 \right).
\end{equation}
Surrogate models usually depend (implicitly or explicitly) on a---yet to be specified---number of past data points, 
\begin{equation}\label{eq:Xdata}
\mbb{D}_k = \{{(x_{k-l-1}, z_{k-l-1})},\, \dots,\, (x_{k}, z_{k} )\},
\end{equation}
where $ z_{(k)} $ is a realization of the random variable $ \rv{z} $ at time instant $ k $.
Hence, by building the surrogate model $m: \Rnx\times\Rnxl  \to \mathbb{R}$, the solution to problem~\eqref{eq:NLP_General_unconstrained} becomes 
\begin{equation}\label{eq:NLP_General_approx_unconstrained}
x_{k+1} = \underset{x\in\Rnx} {\arg\min} \, m_k(x),
\end{equation}
where the shorthand $m_k(x):= m(x, \mbb{D}_k)$ is used.

%

\begin{remark}[Applicability of the considered setting] \label{rem:dynProbs}~\\
	At first glance, the problem setting outline above might look restrictive as it focuses on unconstrained optimization in real vector spaces. Although problem~\eqref{eq:NLP_General_approx_unconstrained} does not explicitly incorporate constraints, one may  convert  constrained optimization problems into unconstrained ones using penalty functions; see~\cite{NoceWrig06}.
	
	Moreover, whenever one aims at optimizing the performance of a repeated (batch) process or of a periodic process, 
$\dot{x} = f_p(x, v),\quad x(0) = x_0$,
	one will typically start with an optimal control problem, which---after applying direct discretization techniques in conjunction with ideas from sequential algorithms for numerical optimal control---can be cast in a mathematically equivalent form to~\eqref{eq:NLP_General_unconstrained} and~\eqref{eq:NLP_General_approx_unconstrained}, see e.g.~\cite{rawlings2017model}. 
 \eBox
\end{remark}
 
\subsection{Surrogate modeling with \glspl{gp}}
\label{sub:gp_intro}


Unlike parametric identification techniques, where data are discarded after constructing the model, \gls{gp}s are kernel-based methods that use all available data (or subset thereof) to learn a map between input and output data. 
We will briefly introduce \glspl{gp} and refer to~\cite{Rasmussen:2005:GPM:1162254, Murphy:2012:MLP:2380985} for further details.

Let $\rv{x}\in\Ltwospace{n_x}$ denote random variables and $x := \rv{x}(\omega)\in\Rnx$ their realizations, where $\Ltwospace{n_x}$ is the underlying Hilbert space of random variables with finite variance.


Considering $l$ available samples,  the input-output data generated by $f$ is $\mathbb{D}$ from~\eqref{eq:Xdata} (dropping the iteration index $k$). Let $\mathbb{X}$ be the projection of $\mathbb{D}$ onto $\Rnx$ and let $\bar{z}\in\R^{l}$ be the projection of $\mathbb{D}$ in direction of $z$. We use \gls{gp} regression to establish a relationship between $\mathbb{X}$ and $\bar{z}$ and obtain a corresponding conditional distribution of  $\rv{z}$ for a new query input point $x$, that is,
\begin{align}\label{eq:gp_m_v}
\rv{z}|x, \mathbb{X}, \bar{z} \sim \mathcal{N}\left(\mu_{\rv{z}}, \sigma_{\rv{z}}^2 \right),
\end{align}
where the mean and the variance of $z$ are
\begin{subequations}
\begin{align}
\mu_{\rv{z}} :=\mathbb{E}[\rv{z}] &= \bar{c}^\top \left( \bar{C}  + \sigma^2 I_p\right)^{-1}\bar{z}, \label{eq:gp_m}\\
\sigma_{\rv{z}}^2 := \mathbb{V}[\rv{z}] &= \kappa(x) -\bar{c}^\top\left( \bar{C}  + \sigma^2 I_p\right)^{-1}\bar{c}. \label{eq:gp_v}
\end{align}
\end{subequations}
Here, $\bar{C} \in \mathbb{R}^{l \times l}$ is a covariance matrix with elements $\bar{C}_{ij} = c(x_i, x_j)$,  $\bar{c} = \left[c(x,x_1), c(x,x_2), \ldots, c(x,x_l) \right]^\top  \in \mathbb{R}^{l \times 1}$, $\kappa(x) = c(x,x)$, where $c(\cdot,\cdot)$ is a covariance function denoted as  \textit{kernel}. For example, the squared exponential covariance function with automatic relevance determination is defined as
\begin{align}
c(x_i,x_j) = \sigma^2_f \;\text{exp}\textstyle \left( - \frac{\left( x_i-x_j\right)^\top \Lambda \left( x_i-x_j\right)}{2} \right),
\end{align}
where $\Lambda = \text{diag}( \lambda_1, \lambda_2, \ldots, \lambda_{n_x})$. The hyperparameters $\mathbf{\mathbf{\theta}} = \left[\sigma_f, \lambda_{1:n_x} \right] \in \mathbb{R}^{n_x+1}$ need to be learned/estimated from the data $\mathbb{D}$ during the training phase. Since the covariance matrix $\bar{C}$ and the covariance vector $\bar{c}$ depend on the hyperparameters $\theta$ and the input data $\mathbb{X}$, one can also write $\bar{C}$ as $\bar{C}(\mathbf{\theta}, \mathbb{X})$. To this end, consider $Q(\mathbf{\mathbf{\theta}}, \mathbb{X}):= \bar{C}(\mathbf{\mathbf{\theta}}, \mathbb{X})  + \sigma^2 I_l$ and the log-marginal likelihood 
\[
\mathcal{L}(\mathbf{\theta}, \mathbb{X}, \bar{z}) =  -\frac{1}{2} \bar{z}^\top Q(\mathbf{\theta}, \mathbb{X})^{-1}\bar{z}  -\frac{1}{2} \text{log} |Q(\mathbf{\theta}, \mathbb{X})| - \frac{n}{2} \text{log} 2\pi.
\]
Given $\mathbb{X}$ and $\bar{z}$, the parameters are learned by maximizing the log-marginal likelihood, 
\begin{equation} \label{eq:learning}
\mathbf{\theta^*}= \underset{\theta}{\mathrm{argmax }} \text{ } \mathcal{ L}(\theta, \mathbb{X},\bar{z}).
\end{equation}

The above nonconvex maximization problem can be solved using deterministic as well as stochastic methods. We refer to~\cite[Chap. 7]{Rasmussen:2005:GPM:1162254} and~\cite[Chap. 2]{kocijan2016modelling} for further details and for insights into the convergence properties.

In summary, the optimal hyperparameters $\mathbf{\theta^*}$ are found by training a \gls{gp}. Then, for the query point $x$, the \gls{gp} provides the output $\rv{z}$ with a normal distribution. The mean and covariance of the normal distribution is computed using the hyperparameters $\mathbf{\theta^*}$ and finding how close $x$ is compared to the data set. This way, the \gls{gp} computes the output distribution of $\rv{z}$ with more weight on the nearest inputs.

A first advantage of using \glspl{gp} compared to fixed-structure parameteric models is that GP models are able to capture complex nonlinear input-output relationships through the use of only a few parameters. This happens because the predicted output is influenced more by the nearby input-output pairs obtained from the training data set. A second advantage is that, generally, they offer an interesting trade-off between exploration and exploitation~\cite{Rasmussen:2005:GPM:1162254}. A third advantage, particularly in \gls{dfo} setting, is the fact \glspl{gp} constitute  global surrogate models. The main drawback of \gls{gp}s is the computational complexity growing as a cubic function of the number of data points $N$, that is, the complexity is $\mathcal{O}(N^3)$. However, this can be addressed using sparse \glspl{gp}~\cite{Rasmussen:2005:GPM:1162254}.



In what follows, at each iteration $k$, we use the \gls{gp} mean as a surrogate model, that is,
\begin{align}
	m_k(x) := \mu_{\rv{z}}\left(x, \mathbb{D}_k\right),
\end{align}
where the notation $\mu_{\rv{z}}\left(x, \mathbb{D}_k\right)$ highlights that, for fixed hyperparameters, $\mu_\rv{z}$ from \eqref{eq:gp_m} takes $x$ as argument---via $\bar c$---and depends on the data set $\mathbb{D}_k$--via $\bar c$ and $\bar C$. 

If the considered function samples obtained via~\eqref{eq:f_sampling} are indeed subject to additive noise, the data $\mathbb{D}_k$ will contain $l$ samples that correspond to realizations of random variables. Hence, the uncertainty surrounding the data $\mathbb{D}_k$ induces the probabilistic nature of the surrogate model and, consequently, the model available at iteration $k$ can be regarded as  $\rv{m}_k: \Rnx \to \Ltwospace{}$.  Conceptually, its realization can be denoted as $m_k :=  \rv{m}_k \left(\omega\right)$. This point of view leads naturally to probabilistic \gls{dfo} methods.

\subsection{Derivative-free probabilistic trust-region methods}
\label{subsec:derv_free_trust_reg}

A standard version of probabilistic derivative-free trust-region method is summarized in Algorithm~\ref{algo: trust_region_vanilla}, cf.~\cite{doi:10.1137/130915984, Larson:2016:SDO:2953787.2953863}. 
 The main idea is to approximate the unknown function via $m_k(x)$ within  a certain neighborhood of $x_{k}$ (a.k.a. the trust region). Whenever the surrogate model fails approximating the original problem, then the trust region is shrunk and the process repeated. 
 Next, we recall the main points of the convergence analysis given in~\cite{Larson:2016:SDO:2953787.2953863}.


\begin{assumption}[Differentiability of $f$~\cite{Larson:2016:SDO:2953787.2953863}]
	\label{assump: lipsc_cont}
The unknown function $ f $ has bounded level sets and the gradient $ \nabla f $ is Lipschitz continuous with constant $ L_g $.
	\eBox
\end{assumption}

\begin{assumption}[Noise with finite variance~\cite{Larson:2016:SDO:2953787.2953863}]
	\label{assump:noise_finite}
The additive noise $ \rv{\nu} $ observed while measuring $ f $  is drawn from a normal distribution with zero mean and finite variance. 
	\eBox
\end{assumption}

For the remainder, we define  $B(x;\Delta)$ as the ball of radius $ \Delta $ centered at $ x \in \mathbb{R}^n $. Furthermore, $ \mathcal{ C}^k $ denotes the set of functions on $  \mathbb{R}^n $ with $ k $ continuous derivatives and $ \mathcal{LC}^k $ denotes the set of functions in $\mathcal{ C}^k$  such that the $ k $th derivative is Lipschitz continuous. 

\begin{definition}[$ \kappa$ fully-linear model~\cite{Larson:2016:SDO:2953787.2953863}]
	\label{def:fully_linear}
Consider $f $ satisfying Assumption
\ref{assump: lipsc_cont}. 
	 Let $ \kappa = \left( \kappa_{ef}, \kappa_{eg}, \nu_{1}^{m} \right) $ be a given vector of constants and let $\Delta > 0 $ be given. A model $m\in \mathcal{LC}^1$ with Lipschitz constant $ \nu_{1}^{m} $ is a \textit{$\kappa$ fully-linear} model of $ f $ on $B(x;\Delta)$ if for all $s \in B(0;\Delta)$, 
\begin{subequations}	\label{eq:fully_linear_def}
\begin{align}
	|  f(x+s) -  m(x+s)|  &\leq \kappa_{ef} \Delta^2, \text{ and } \label{eq:fully_linear_def_zero} \\
	 \| \nabla f(x+s) - \nabla m(x+s) \| &\leq \kappa_{eg} \Delta. \label{eq:fully_linear_def_first}
\end{align}
\end{subequations}
 \eBox
\end{definition}

The above definition is  key in the convergence analysis for the case of probabilistic surrogate models. The main idea is to show that these models have good accuracy with sufficiently high probability~\cite{doi:10.1137/130915984}. Since derivative-free trust-region algorithms sample and collect data at each iteration, let $F_{k-1}^M$ denote the realization of events during the first {$k-1$} iterations of the algorithm. Now, we are ready to define a probabilistic $ \kappa $ fully-linear surrogate model. 
\begin{definition}[{$\kappa$ fully-linear model with probability  $\alpha $~\cite{Larson:2016:SDO:2953787.2953863}}]
	\label{def:probab_fully_linear}  
	A sequence of random models $ \{\rv{m}_k \} $ is  $ \kappa $ fully linear with probability $\alpha $ on $ \{B\left(x_k, \Delta_k\right)\} $ if the events
	\[ S_k = \{ m_k \text{ is a $\kappa $ fully-linear model of } f \text{ on }B\left(x_k, \Delta_k\right) \}\]
	satisfy the condition
	$\mathbb{P}\left(S_k|F_{k-1}^M\right) \geq \alpha$ 
	 for all $k$ sufficiently large. 
	\eBox
\end{definition}

Next, we introduce Algorithm \ref{algo: trust_region_vanilla}. The main idea is to build a surrogate model within the trust-region radius and use it to compute a minimizer. As long as the objective decreases sufficiently, accept the step and increase the trust-region radius, otherwise decrease the radius and reject the step. The challenge stems from the probabilistic nature of the surrogate model, in particular from the fact that the confidence in the model is probabilistic. This hinders
 increasing the trust-region radius significantly. Hence, it is important to have a relationship between the probability $ \alpha $ (confidence in the surrogate model) and $ \gamma_{inc}/\gamma_{dec} $ (increment/decrement of the radius). This relationship reads~\cite{Larson:2016:SDO:2953787.2953863}:

\begin{equation}
\label{eq:alp_relation_gamma}
\alpha \geq \left\{\frac{1}{2}, 1 - \frac{\frac{\gamma_{inc}-1}{\gamma_{inc}}}{4 \left[\frac{\gamma_{inc}-1}{2\gamma_{inc}} + \frac{1-\gamma_{dec}}{\gamma_{dec}}\right]}, 1 - \frac{1-\gamma_{dec}}{2\left( \gamma_{inc}^2-\gamma_{dec}\right)} \right\}.
\end{equation}

\begin{remark}
	A careful look at Step 1 of Algorithm~\ref{algo: trust_region_vanilla} reveals that we need to build a $ \kappa $ fully-linear model only for sufficiently large $ k $. This allows having a relatively inaccurate model at the beginning, thereby avoiding unnecessary sampling as long as there is sufficient improvement.
\end{remark}

\begin{algorithm}[t]
	
	\KwData{Initial model $m_{0}$, initial point $x_0$, and constants  $ 0 < \gamma_{dec} < 1 < \gamma_{inc}$,  $0 < \eta < \beta < 1 $, $ 0 < \Delta_0 $} and $ \alpha \in \left( 0 , 1 \right) $ satisfying~\eqref{eq:alp_relation_gamma}. Set k =0.
	\begin{enumerate}
		
		\item {\textbf{Model building: }} Build  $ m_k $, a $ \kappa $ fully-linear model with probability $\alpha_k $ on $  B(x_k; \Delta_{k}) $, for some $ \alpha_k \in (0,1) $ such that $ \alpha_k \geq \alpha $ for sufficiently large $ k $.

		\item \textbf{Step calculation:}
		
		\begin{equation}  
		s_{k} :=  \underset{s:\|s\| \leq \Delta_{k}} {\arg\min} \,
		m_{k}(x_{k} + s)
		\end{equation}
		
		\item \textbf{Compute model decrement: }
		
		\begin{enumerate}
			\item If $ m_{k}(x_k ) - m_{k}(x_k + s_k)  < \beta \text{ min } \left\{ \Delta_k , \Delta_k^2 \right\}$ then $ x_{k+1} = x_k $; $ \Delta_{k+1} = \gamma_{dec}\Delta_{k}$ and go to Step 6.
			\item Else go to Step 4).
		\end{enumerate}
		\item \textbf{Estimate improvement after plant evaluation: } Evaluate
		\begin{equation} 
		\rho_{k} = \frac{F_{k}^0 - F_{k}^{s_k}}{m_{k}(x_k ) - m_{k}(x_k + s_k)}.
		\end{equation}
		
		%

		\item {\textbf{Trust region and step update: }} 
		\begin{itemize}
			\item If $\rho_{k} \geq \eta$, then $x_{k+1} = x_{k} + s_{k}$ and $\Delta_{k+1} =  \gamma_{inc} \Delta_{k} $.		
			\item If $\rho_{k} < \eta $, then $x_{k+1} = x_{k}$ and $\Delta_{k+1} =  \gamma_{dec} \Delta_{k} $.	
			
			%
			
			

		\end{itemize}
		
		\item {\textbf{Setting index: }} k = k+1 and go to Step 1.
	\end{enumerate}

	\caption{ Derivative-Free Trust-Region Method~\cite{Larson:2016:SDO:2953787.2953863}}
	\label{algo: trust_region_vanilla}
\end{algorithm}

\vspace{10pt}

\begin{theorem}[Global convergence~\cite{Larson:2016:SDO:2953787.2953863}]\label{theorm:trust_vanilla_stocha_converge}
If Assumptions 1-2 are satisfied, and $ \alpha $ is chosen to satisfy~\eqref{eq:alp_relation_gamma}, then $ \left\{  \| \nabla f(x_{k}) \| \right\} $ converges in probability to zero. That is, for all $ \epsilon > 0 $,
	$\lim_{k\to\infty} \mathbb{P} \left[ \| \nabla f(x_{k}) \| > \epsilon  \right] = 0.$
\eBox
\end{theorem}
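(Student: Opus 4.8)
\emph{Proof idea.} This statement is the known convergence result for the stochastic trust-region framework, and the argument I would reconstruct is the standard supermartingale analysis. The plan is to introduce a stochastic merit (Lyapunov) function of the form $\Phi_k = \nu\,(f(x_k) - f^\star) + (1-\nu)\,\Delta_k^2$, with a weight $\nu \in (0,1)$ to be fixed later, where $f^\star$ is the infimum of $f$ on the bounded level set guaranteed by Assumption~\ref{assump: lipsc_cont}, so that $\Phi_k \ge 0$. The goal is to show that $\{\Phi_k\}$ is a supermartingale with respect to the filtration $\{F_{k-1}^M\}$ generated by the algorithm's history, whose conditional expected decrement is strictly negative whenever $\|\nabla f(x_k)\|$ stays bounded away from zero; a martingale-convergence argument then forces $\Delta_k \to 0$ and, after one further step, $\|\nabla f(x_k)\| \to 0$ in probability.

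First I would classify each iteration $k$ as \emph{good}---meaning that $m_k$ is $\kappa$ fully-linear on $B(x_k;\Delta_k)$ in the sense of Definitions~\ref{def:fully_linear}--\ref{def:probab_fully_linear} \emph{and} that the plant estimates $F_k^0, F_k^{s_k}$ entering Step~4 are sufficiently accurate---or \emph{bad} otherwise; by Assumption~\ref{assump:noise_finite} together with the probabilistic full-linearity of the model, an iteration is good with conditional probability at least $\alpha$. On a good iteration the usual trust-region bookkeeping applies. If Step~3(a) rejects the step, then $\Delta_{k+1} = \gamma_{dec}\Delta_k$, so the $(1-\nu)\Delta_k^2$ term of $\Phi$ strictly decreases while $f$ is unchanged; if $\rho_k \ge \eta$ in Step~5, the accepted step yields---via~\eqref{eq:fully_linear_def} and the threshold $\beta\min\{\Delta_k,\Delta_k^2\}$ enforced in Step~3(a)---an actual decrease of $f$ of order $\min\{\Delta_k^2,\Delta_k\|\nabla f(x_k)\|\}$ that dominates the increase $(1-\nu)(\gamma_{inc}^2-1)\Delta_k^2$ of the radius term once $\nu$ is chosen close enough to $1$; and if $\rho_k < \eta$, Step~5 shrinks the radius again. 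In every sub-case one obtains $\mathbb{E}[\Phi_{k+1}-\Phi_k \mid F_{k-1}^M,\ \text{good}] \le -c_1\min\{\Delta_k^2,\Delta_k\|\nabla f(x_k)\|\}$ for some $c_1 > 0$. On a bad iteration the worst case is an accepted, expanding step; since Assumption~\ref{assump:noise_finite} lets the estimates $F_k^0, F_k^{s_k}$ be made accurate to order $\Delta_k^2$ (by averaging sufficiently many noisy samples), one still has $\Phi_{k+1}-\Phi_k \le c_2\,\Delta_k^2$ for some constant $c_2 > 0$.

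Combining the two cases and using $\mathbb{P}(\text{good}\mid F_{k-1}^M) \ge \alpha$ gives a conditional expected increment of $\Phi_k$ bounded above by $-\alpha c_1 \min\{\Delta_k^2,\Delta_k\|\nabla f(x_k)\|\} + (1-\alpha)\,c_2\,\Delta_k^2$, and the inequality~\eqref{eq:alp_relation_gamma} relating $\alpha$ to $\gamma_{inc}$ and $\gamma_{dec}$ is precisely the condition making this quantity nonpositive---and strictly negative once $\Delta_k$ is not too small relative to $\|\nabla f(x_k)\|$. I expect this matching of constants to be the main obstacle: one must propagate them simultaneously through the full-linearity error bounds~\eqref{eq:fully_linear_def}, the accuracy bounds on $F_k^0$ and $F_k^{s_k}$, and the contraction/expansion factors $\gamma_{dec}$ and $\gamma_{inc}$, and verify that the three thresholds appearing in~\eqref{eq:alp_relation_gamma} are exactly what is needed. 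Once the supermartingale property is in place, the telescoped conditional decrements are summable, which forces $\Delta_k \to 0$; a final argument---bounding the number of iterations on which $\|\nabla f(x_k)\|$ exceeds a fixed $\epsilon$ while $\Delta_k$ is already small, and using Lipschitz continuity of $\nabla f$ from Assumption~\ref{assump: lipsc_cont} to rule out such iterations occurring with non-vanishing frequency---upgrades $\liminf_k\|\nabla f(x_k)\| = 0$ to $\|\nabla f(x_k)\| \to 0$ in probability, i.e.\ $\lim_{k\to\infty}\mathbb{P}[\|\nabla f(x_k)\| > \epsilon] = 0$ for every $\epsilon > 0$, which is the assertion.
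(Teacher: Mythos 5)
First, note that the paper does not prove this theorem at all: it is imported verbatim from the cited reference~\cite{Larson:2016:SDO:2953787.2953863}, so there is no in-paper proof to match. Measured against the actual argument in that source, your sketch follows a genuinely different route. You propose the Lyapunov/supermartingale analysis on $\Phi_k=\nu\,(f(x_k)-f^\star)+(1-\nu)\Delta_k^2$ (the STORM-style technique), whereas the cited proof first establishes that $\Delta_k\to 0$ almost surely by analyzing the random-walk dynamics of the trust-region radius under the good/bad-iteration dichotomy (this is exactly the ``Lemma 4'' invoked in Remark~3 of the paper), and then converts ``a gradient bounded away from zero together with a fully-linear model and a small radius forces successful iterations and radius growth'' into convergence in probability by a counting/contradiction argument. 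The supermartingale route buys more (it is the basis for complexity/rate results), but the price is that the three specific thresholds in~\eqref{eq:alp_relation_gamma} are produced by the radius-dynamics lemmas of the cited analysis; your drift condition would yield a sufficient bound on $\alpha$ of the generic form $\alpha\ge c_2/(c_1+c_2)$, and your claim that~\eqref{eq:alp_relation_gamma} ``is precisely'' what makes the drift nonpositive is asserted rather than derived --- as you yourself flag, this constant-matching is the real content and is left open.

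Two further points would need repair before the sketch closes. (i) Your bad-iteration bound $\Phi_{k+1}-\Phi_k\le c_2\Delta_k^2$ is not valid pointwise: on an accepted step of length at most $\Delta_k$, the true objective can increase by a first-order amount of size $\|\nabla f(x_k)\|\Delta_k+\tfrac{L_g}{2}\Delta_k^2$, not $O(\Delta_k^2)$, and on a bad iteration you cannot invoke accuracy of $F_k^0,F_k^{s_k}$ by definition; the supermartingale analyses handle this through explicit conditional-variance (or probabilistic $\varepsilon_F$-accuracy) requirements on the function estimates, which Assumption~\ref{assump:noise_finite} alone, within Algorithm~\ref{algo: trust_region_vanilla} as stated, does not supply. (ii) The final upgrade from $\liminf_k\|\nabla f(x_k)\|=0$ (which is what summable decrements most naturally give) to convergence in probability of the whole sequence is exactly the delicate step in the cited paper and in~\cite{doi:10.1137/130915984}; ``bounding the frequency of iterations with large gradient'' needs the almost-sure convergence $\Delta_k\to 0$ plus the argument that a large gradient with a small radius and an accurate model makes the next iteration successful with probability at least $\alpha$, and this should be spelled out rather than gestured at.
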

\vspace{10pt}



%

At this point a pivotal question arises: how to build a  $ \kappa $ fully-linear surrogate model with probability $\alpha $? Details of building and certifying a probabilistic local surrogate model at each iteration---mainly via linear and nonlinear interpolation/regression---are given in~\cite{doi:10.1137/130915984,Larson:2016:SDO:2953787.2953863}. Here, we aim at reducing the number of expensive plant evaluations by constructing a global instead of a local surrogate model. For that, we will use a \gls{gp} as the surrogate model. We will also show how to certify a \gls{gp} as a probabilistic fully-linear model. To the best of the authors' knowledge, this is still an open question, although \glspl{gp} have been used in a derivative-free trust-region framework~\cite{2017arXiv170304156A, Chanona19a}.  
\section{CERTIFICATION PROOF}
\label{sec:proposed_algorithm}


\

We certify that \glspl{gp} are probabilistic fully-linear models. We remind the reader that we use the \gls{gp} mean  as the surrogate model, that is, $m(x) := \mu_{\rv{z}}\left(x, \mathbb{D}_k\right) $.

\vspace{10pt}
\begin{definition}[Reproducing kernel Hilbert Space~\cite{Rasmussen:2005:GPM:1162254}]
	\label{def:rkhs}
	Let $ \mathcal{H} $ be the Hilbert space
	of real functions $ f $ defined on the index set $ X $. Then, $ \mathcal{H} $ is called a \gls{rkhs} endowed with an inner product $ \langle\cdot,\cdot\rangle $ (and norm $ \| f \|_{\mathcal{H}} = \sqrt{\langle f,f \rangle_{\mathcal{H}}} $) if there exists a function $c : X \times X \rightarrow \mathbb{R}$ with the following properties:
	\begin{enumerate}
		\item for every $ x $, $c(x, x')$ as a function of $ x' $ belongs to $\mathcal{H}$, and
		\item $ k $ has the reproducing property $ \langle f(\cdot), c(\cdot, x)\rangle_{\mathcal{H}} = f(x) $.
	\end{enumerate}
	\eBox
\end{definition}

The aim is to show that~\eqref{eq:fully_linear_def_zero} and~\eqref{eq:fully_linear_def_first} hold with probability at least $\alpha $ when the \gls{gp} mean is used as a surrogate model. For this, the following two properties are assumed.
 \vspace{10pt}
\begin{assumption}[Bounded \gls{rkhs} norm~\cite{6138914}]
	\label{assump:f_rkhs_norm}~\\
	The unknown function $f(x)$ has a known bounded \gls{rkhs} norm $\zeta$ under a known kernel $c$, that is, $\| f(x)\|_{c} \leq \zeta < \infty$. \eBox
\end{assumption}

\color{black}
\vspace{10pt}
\begin{assumption}[Lipschitzness of the mismatch function]
	\label{assump:lip_mismatch_grad}
	The mismatch function $h(x) := f(x)- m(x) $ has Lipschitz continuous gradient with constant $\gamma_{lh}$. Furthermore, the sequence $x_k$ generated by applying Algorithm~\ref{algo: trust_region_vanilla} satisfies  $\| \nabla^2 h(x_k) \| \leq \kappa_{bhh} < \infty$, that is, the mismatch function has a bounded Hessian. \eBox
\end{assumption}
\vspace{10pt}

Assumption~\ref{assump:lip_mismatch_grad} is not very strong and is a consequence of Assumption~\ref{assump: lipsc_cont}: the unknown function $f(\cdot)$ has Lipschitz continuous gradient with bounded Hessian. Note that most of the practically used kernels (e.g. Matern, squared exponential) have Lipschitz continuous gradients~\cite{Rasmussen:2005:GPM:1162254}. Before deriving the main result, we first state that the distance between an unknown function and the mean is bounded by the \gls{gp} variance  with some probability $1 - \delta$.


%
%


\begin{lemma}[Bound on mismatch function~\cite{6138914}]
	\label{lemma:GP_mean_bound}	~\\
	Let Assumption~\ref{assump:f_rkhs_norm} holds and let $\delta \in (0,1)$. It follows that
	$\mathbb{P} \left\{ |m(x) - f(x)| \leq \sqrt{\beta(N,\delta)}\, \sigma_z(x,N) \right\}   \geq 1 - \delta$.\eBox
	
\end{lemma}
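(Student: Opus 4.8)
The statement of Lemma~\ref{lemma:GP_mean_bound} is essentially a restatement of a known result from the Gaussian process bandit / RKHS literature (Srinivas et al., and the cited \cite{6138914}), so the plan is to adapt that argument rather than invent a new one.

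First I would fix the setup: the unknown $f$ lives in the RKHS $\mathcal{H}$ associated with the kernel $c$ with $\|f\|_c \le \zeta < \infty$ by Assumption~\ref{assump:f_rkhs_norm}, and we have observed $N$ noisy samples $\rv{z}_i = f(x_i) + \rv{\nu}_i$ with $\rv{\nu}_i \sim \mathcal{N}(0,\sigma^2)$ i.i.d. The GP posterior mean $m(x) = \mu_{\rv{z}}(x,\mathbb{D})$ and posterior standard deviation $\sigma_z(x) = \sigma_z(x,N)$ are given by \eqref{eq:gp_m}--\eqref{eq:gp_v}. The goal is a uniform-in-$x$ high-probability bound, so the natural route is to decompose the error $m(x) - f(x)$ into a deterministic bias term (coming from the fact that we regularize / the true $f$ has finite RKHS norm) and a stochastic term (coming from the noise realizations), and to bound each.

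The key steps, in order, are: (i) write $m(x) - f(x) = \bar c^\top Q^{-1} \bar z - f(x)$ with $Q = \bar C + \sigma^2 I_N$, and split $\bar z = \bar f + \bar\nu$ where $\bar f = (f(x_1),\dots,f(x_N))^\top$ and $\bar\nu$ is the noise vector; (ii) for the bias term $\bar c^\top Q^{-1} \bar f - f(x)$, use the reproducing property (Definition~\ref{def:rkhs}) to write $f(x) = \langle f, c(\cdot,x)\rangle_{\mathcal H}$ and $f(x_i) = \langle f, c(\cdot,x_i)\rangle_{\mathcal H}$, then apply Cauchy--Schwarz in $\mathcal H$ to bound this by $\|f\|_{\mathcal H}$ times a quantity that the standard calculation identifies as $\sigma_z(x)$ (up to a factor depending on $\sigma$), giving a term $\le \zeta\,\sigma_z(x)$ or similar; (iii) for the stochastic term $\bar c^\top Q^{-1} \bar\nu$, observe it is a Gaussian random variable with mean zero and variance $\sigma^2 \bar c^\top Q^{-1} \bar C Q^{-1} \bar c \le \bar c^\top Q^{-1} \bar c \le \kappa(x) - \sigma_z^2(x) \le \kappa(x)$, or more sharply bounded in terms of $\sigma_z(x)$, and apply a Gaussian tail bound; (iv) to make the bound uniform over $x$ (so that the event holds simultaneously, which is what the trust-region convergence machinery implicitly needs when it invokes fully-linearity on the whole ball $B(x_k;\Delta_k)$), either use an information-gain argument or, more simply, note that the statement as written is pointwise in $x$ so a single-query Gaussian tail bound with $\sqrt{\beta(N,\delta)} = \zeta + \sqrt{2\sigma^2\log(1/\delta)}$ (or the analogous expression from \cite{6138914}) suffices; (v) collect terms to define $\beta(N,\delta)$ explicitly and conclude $\mathbb{P}\{|m(x)-f(x)| \le \sqrt{\beta(N,\delta)}\,\sigma_z(x,N)\} \ge 1-\delta$.

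The main obstacle is the precise bookkeeping in step (ii)--(iii): one must be careful about whether the bound comes out as $\sqrt{\beta}\,\sigma_z(x)$ with $\sigma_z$ the \emph{posterior} standard deviation (as the lemma claims) rather than something involving $\kappa(x)$ or the prior variance, and getting the factors of $\sigma$ right in the RKHS-norm manipulation. The cleanest way to avoid reinventing this is to cite the relevant theorem of \cite{6138914} directly and simply record that our $f$, kernel $c$, noise model, and posterior quantities satisfy its hypotheses, with $\sqrt{\beta(N,\delta)}$ being the constant furnished there; the only nontrivial verification is that Assumption~\ref{assump:f_rkhs_norm} supplies the required bounded-RKHS-norm hypothesis and Assumption~\ref{assump:noise_finite} the sub-Gaussian (indeed Gaussian) noise hypothesis. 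I would therefore present the proof as: invoke Definition~\ref{def:rkhs} and Assumption~\ref{assump:f_rkhs_norm}, apply the concentration result of \cite{6138914} with $\delta$ as given, and identify $\beta(N,\delta)$ explicitly.
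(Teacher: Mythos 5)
Your proposal is correct and ultimately lands on the same approach as the paper: the lemma is taken directly from the cited reference (Srinivas et al., \cite{6138914}), with Assumption~\ref{assump:f_rkhs_norm} supplying the bounded-RKHS-norm hypothesis and the Gaussian noise model matching its setting, and $\sqrt{\beta(N,\delta)}$ being the constant furnished there. The paper gives no standalone proof, so your additional bias/noise decomposition sketch is extra detail rather than a divergence.
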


Here, $\sqrt{\beta(N,\delta)}$ depends on the number of samples, the probability $\delta$ and the \gls{rkhs} norm $\| f\|_{c}$, see~\cite{6138914} for details. If the unknown function $f$ is sampled from a \gls{gp}, one can compute $\beta$ in closed form~\cite{2019arXiv190601376L}. We note that for highlighting the dependence of $ \beta $ and $ \sigma_{z} $ on the number of  samples and the probability $ \delta $, we simply write them as $ \beta(N,\delta) $ and $ \sigma_z(x,N) $.

%

\vspace{10pt}
\begin{theorem}[\gls{gp} is $ \kappa$ fully linear  with probability $\alpha$]
	\label{theorm:gp_probab_fully_linear}
	Let Assumptions~\ref{assump:f_rkhs_norm} and~\ref{assump:lip_mismatch_grad} hold. If $ 0 < \Delta < \frac{6}{\gamma_{lh}} \left(\kappa_{eg} - 2 \kappa_{ef} - \kappa_{bhm}\right) $, then there exists a positive integer $ N < \infty $ such that, after $ N $ sampling steps, a \gls{gp} can  be certified  $ \kappa $ fully linear with probability $\alpha$.	\eBox
\end{theorem}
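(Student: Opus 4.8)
The plan is to check, with probability at least $\alpha$, the two defining inequalities~\eqref{eq:fully_linear_def_zero} and~\eqref{eq:fully_linear_def_first} of a $\kappa$ fully-linear model (Definition~\ref{def:fully_linear}) on the current ball $B(x_k;\Delta_k)$, and then to glue them by a union bound. The zeroth-order estimate~\eqref{eq:fully_linear_def_zero} follows essentially for free from Lemma~\ref{lemma:GP_mean_bound} once the number of samples $N$ is large enough, while the first-order estimate~\eqref{eq:fully_linear_def_first} is deduced from it by a Taylor-type argument that converts a uniform bound on the mismatch $h:=f-m$ into a bound on $\nabla h$, using Assumption~\ref{assump:lip_mismatch_grad}; it is this last step that produces the restriction on $\Delta$.

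\textbf{Zeroth order.} Fix $\delta\in(0,1)$ and apply Lemma~\ref{lemma:GP_mean_bound}: with probability at least $1-\delta$, $|m(x)-f(x)|\le\sqrt{\beta(N,\delta)}\,\sigma_z(x,N)$. First I would make this bound hold uniformly over $B(x_k;\Delta_k)$ (or over a slightly larger ball, which is convenient below): either it is already uniform in the confidence construction of~\cite{6138914}, or one upgrades it by covering the ball with a finite $\varepsilon$-net, applying the lemma on the net with $\delta$ rescaled by the net cardinality, and absorbing the discretisation error via the Lipschitz continuity of $x\mapsto\sigma_z(x,N)$, of $f$ and of $m$; this only inflates $\beta$ by an $N$-independent factor. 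Next I would exploit that, if the $N$ samples of the model-building step form a space-filling design in that ball, $\sigma_z(\cdot,N)$ is non-increasing in $N$ and, for the kernels used in practice (squared exponential, sufficiently smooth Mat\'ern), the product $\beta(N,\delta)\,\sup_{\|s\|\le\Delta_k}\sigma_z(x_k+s,N)^2$ tends to $0$, because the polylogarithmic growth of $\beta(N,\delta)$ is dominated by the decay of the posterior variance under infill sampling. Hence there is a finite $N<\infty$ after which $\sqrt{\beta(N,\delta)}\,\sigma_z(x_k+s,N)\le\kappa_{ef}\Delta_k^2$ for all such $s$; on that event~\eqref{eq:fully_linear_def_zero} holds.

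\textbf{First order.} On the same event $|h(y)|\le\kappa_{ef}\Delta_k^2$ for all $y$ in (a neighbourhood of) $B(x_k;\Delta_k)$, while by Assumption~\ref{assump:lip_mismatch_grad} $\nabla h$ is $\gamma_{lh}$-Lipschitz with $\|\nabla^2 h(x_k)\|\le\kappa_{bhm}$. I would then bound $\|\nabla h\|$ by the classical device: pick the steepest-ascent direction of $h$, Taylor-expand $h$ along it over a step of length $\Delta_k$ (staying inside the ball, enlarging the sampled ball slightly if needed so that the step has full room), bound the zeroth-order terms by $\kappa_{ef}\Delta_k^2$, the quadratic term by $\kappa_{bhm}$, and the third-order remainder by $\gamma_{lh}\Delta_k^3/6$; dividing by $\Delta_k$ and, where necessary, propagating from $x_k$ to a generic $x_k+s$ through the Hessian bound, one gets an estimate of the form $\|\nabla h(x_k+s)\|\le\bigl(2\kappa_{ef}+\kappa_{bhm}\bigr)\Delta_k+\frac{\gamma_{lh}}{6}\Delta_k^2$ (the $6=3!$ and the affine dependence of the hypothesis on $\kappa_{eg}-2\kappa_{ef}-\kappa_{bhm}$ mirror exactly this expansion; the precise constants depend on bookkeeping choices and can be absorbed into $\kappa_{ef},\kappa_{eg},\kappa_{bhm}$). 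Imposing that this be at most $\kappa_{eg}\Delta_k$ and solving for $\Delta_k$ yields $\frac{\gamma_{lh}}{6}\Delta_k\le\kappa_{eg}-2\kappa_{ef}-\kappa_{bhm}$, i.e.\ exactly the standing hypothesis $0<\Delta_k<\frac{6}{\gamma_{lh}}(\kappa_{eg}-2\kappa_{ef}-\kappa_{bhm})$ (which in particular presupposes $\kappa_{eg}>2\kappa_{ef}+\kappa_{bhm}$, a mild condition on the certificate constants making the admissible interval nonempty). Hence~\eqref{eq:fully_linear_def_first} holds on the same event.

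\textbf{Conclusion and main difficulty.} Both inequalities hold on the event of Lemma~\ref{lemma:GP_mean_bound}, whose probability is at least $1-\delta$; taking $\delta\le 1-\alpha$ (and reconciling it with the $\alpha$ demanded through~\eqref{eq:alp_relation_gamma}) shows the \gls{gp} mean is a $\kappa$ fully-linear model of $f$ on $B(x_k;\Delta_k)$ with probability at least $\alpha$ after the finite $N$ isolated above, as claimed. The step I expect to be hardest is the quantitative part of the zeroth-order argument: turning the asymptotic, pointwise statement ``$\sigma_z\to 0$'' into an explicit finite $N$ that is uniform over the ball, which requires a covering argument combined with a kernel-dependent decay rate for the posterior variance under infill sampling and a check that the growth of $\beta(N,\delta)$ does not overwhelm that decay. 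By comparison, the first-order step is careful Taylor calculus and the conclusion is routine probability bookkeeping.
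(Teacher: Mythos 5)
Your proposal follows essentially the same route as the paper's proof: the zeroth-order bound~\eqref{eq:fully_linear_def_zero} is obtained from Lemma~\ref{lemma:GP_mean_bound} after sufficiently many samples, and the first-order bound~\eqref{eq:fully_linear_def_first} from the same third-order Taylor expansion along the normalized gradient direction $s=\Delta\,\nabla h(x)/\|\nabla h(x)\|$, producing the identical constants $2\kappa_{ef}\Delta+\kappa_{bhh}\Delta+\frac{\gamma_{lh}}{6}\Delta^2$ and the same restriction on $\Delta$. The only differences are in bookkeeping: you make explicit the uniformity of the variance bound over the ball and the finite-$N$ argument (which the paper simply asserts via increased sampling and Algorithm~1 of~\cite{6138914}), and you put both pointwise mismatch bounds on a single event of probability $1-\delta$, whereas the paper multiplies them to $(1-\delta)^2$ and then requires $\alpha\le(1-\delta)^2$ --- your version is, if anything, the cleaner accounting.
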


\begin{proof}	
	Following Definition~\ref{def:probab_fully_linear}, the goal is to prove that equations~\eqref{eq:fully_linear_def_zero} and~\eqref{eq:fully_linear_def_first}  hold with probability at least $\alpha$. 
	
	Let us start with equation~\eqref{eq:fully_linear_def_zero} and  consider any point within the trust region, that is, $ x \in B(x_k, \Delta_k) $. Increased sampling will validate the probability bound in Lemma~\ref{lemma:GP_mean_bound}. Upon performing $ N $ plant evaluations and applying Algorithm 1 in~\cite{6138914} with $ \alpha \leq 1-\delta $, the following holds with  probability $\alpha$ for a given $  \kappa_{ef} $ and $ \Delta $: 

	\begin{equation}
	\begin{aligned}
	\label{eq: hx_variance_zero}
	|h(x)| = 	|m(x) - f(x)|  \leq \sqrt{\beta(N,\delta)} \,\sigma_z(x,N) \leq  \kappa_{ef}  \Delta^2,
	\end{aligned}
	\end{equation} 
	which certifies equation~\eqref{eq:fully_linear_def_zero} with probability $ \alpha $.
	
	Next we turn to equation~\eqref{eq:fully_linear_def_first} and take any $ x, x_s \in B(x_k, \Delta_k) $ such that $ x_s = x + s $. Taylor's expansions give:
	\begin{align}
	h(x+s) =& h(x) + s^{\top} \nabla h(x) + s^{\top} \nabla^2 h(x) s + \mathcal{O}(s^3) \notag \\
	|s^{\top} \nabla h(x)| =& |h(x+s) - h(x) - s^{\top} \nabla^2 h(x) s - \mathcal{O}(s^3)| \notag \\
	\leq& |h(x+s)| + |h(x)| + |s^{\top} \nabla^2 h(x) s| + |\mathcal{O}(s^3)| \notag  \\
	\leq& |h(x+s)| + |h(x)| + |s^{\top} \nabla^2 h(x) s| + \frac{\gamma_{lh}}{6} \|s\|^3 \notag ~,
	\end{align}
		where the first inequality comes from norm properties and the second using Lemma 4.1.14 in~\cite{highertaylor}. Substituting
	$ s := \frac{ \nabla h(x)  \Delta}{\| \nabla h(x) \|} $ by following Lemma 4.7 in~\cite{doi:10.1137/18M1173277} gives,
	\begin{align*}
	\Delta \| \nabla h(x) \| \leq& |h(x+s)| + |h(x)| +    \Delta^2 \| \nabla^2 h(x) \| + \frac{\gamma_{lh}}{6} \Delta^3\\
	\Delta \| \nabla h(x) \| \leq& |h(x+s)| + |h(x)| +    \kappa_{bhh}\Delta^2   + \frac{\gamma_{lh}}{6} \Delta^3. \notag 
	\end{align*}	
	Here, the last inequality arises because of Assumption~\ref{assump:lip_mismatch_grad}. As shown in the first part of this proof, one can guarantee that   $ |h(x+s)|, | h(x)| \leq  \kappa_{ef} \Delta^2$ with at least  probability $ 1-\delta $. Hence,  the following holds with with  probability at least $ (1-\delta)^2 $:	
	\begin{align}
	\Delta \| \nabla h(x) \| \leq&   2\kappa_{ef} \Delta^2 +   \Delta^2  \kappa_{bhh} + \frac{\gamma_{lh}}{6} \Delta^3 \notag \\
	\| \nabla h(x) \| \leq&    2\kappa_{ef} \Delta +   \Delta  \kappa_{bhh} + \frac{\gamma_{lh}}{6} \Delta^2. \notag
	\end{align}
	Choosing $ \delta $ such that $ \alpha \leq (1-\delta)^2 $ and combining the above with Definition~\ref{def:probab_fully_linear} and~\eqref{eq:fully_linear_def_first},
	it remains to show that, for a given $ \kappa_{eg}$, the following criterion can be satisfied: 
\[	2 \kappa_{ef} \Delta +   \Delta  \kappa_{bhh} + \frac{\gamma_{lh}}{6} \Delta^2 \leq  \kappa_{eg} \Delta.\]
		 Since $ 0 < \Delta < \frac{6}{\gamma_{lh}} \left(\kappa_{eg} - 2 \kappa_{ef} - \kappa_{bhm}\right) $, the above inequality is satisfied. Hence, equation~\eqref{eq:fully_linear_def_first} holds with 
		 probability  at least $ \alpha$, which concludes the proof. 
\end{proof}
\vspace{10pt}


\begin{remark} [Computing $ \beta $ and finding maximum $ \sigma_z(x) $]
	 $\sqrt{\beta(N,\delta)}$ is not a function of $x$. However, we need to determine the maximum of $ \sigma_z(x,N) $ over $x$ within the trust region. This problem has been tackled rigorously in the machine learning community, see~\cite{6138914} for details. However, for our application, one need not explicitly  compute these quantities. Another way to look at it is that one can always choose arbitrarily large $ \kappa_{ef} $ and $ \kappa_{eg} $ such that~\eqref{eq:fully_linear_def_zero} and~\eqref{eq:fully_linear_def_first} are satisfied with probability at least $ \alpha $. 
	\eBox
\end{remark}

\begin{remark}[Condition on $ \Delta $ in Theorem~\ref{theorm:gp_probab_fully_linear}]~\\
The condition on the trust-region radius $ \Delta $ in  Theorem~\ref{theorm:gp_probab_fully_linear} does not limit/restrict the algorithm significantly. The reason is that one can choose arbitrarily large values of $ \kappa $. Moreover, the trust-region radius almost surely goes to zero [Lemma 4~\cite{Larson:2016:SDO:2953787.2953863}]. Hence, for any positive $ \kappa $, the condition on the trust region is  almost surely satisfied. 	\eBox
\end{remark}

Using \gls{gp} in  the framework of Algorithm~\ref{algo: trust_region_vanilla} yields almost surely convergence. Moreover, it has two main advantages:
(i) since \glspl{gp} approximate unknown functions globally, one does not need to sample after each trust-region iterations as opposed to standard trust-region approaches, where $n$ and $\frac{(n+1)^2}{2}$ data points are required for linear interpolation and nonlinear polynomial-based regression, respectively. This saves a significant amount of plant evaluations; 
(ii)  from an implementation point of view, there is no need to build a model at each trust-region iteration. This is due to the fact that the \gls{gp} mean  converges to the exact function in the limit, as per Lemma~\ref{lemma:GP_mean_bound} and for $ \alpha_k > \alpha $  for sufficiently large $ k $. In fact, to implement the algorithm after a failed iteration, one simply needs to sample (not necessarily in the trust-region radius) a few points. Hence, the computation of $ \beta(N,\delta) $ and of the maximal variance $ \sigma_z(x,N) $ mentioned in the proof of Theorem~\ref{theorm:gp_probab_fully_linear} can be avoided. This obviously reduces the computational burden.
\section{NUMERCIAL CASE STUDY}
\label{sec:case_studies}

We apply the proposed method to the acetoacetlytation of pyrrole with diketene \cite{CHACHUAT20091557}. It is a batch-to-batch optimization of a semi-batch reactor process with 4 reactions: $ \text{A} + \text{B}  \overset{k_1}{\longrightarrow}   \text{C} $, $  2\text{B}  \overset{k_2}{\longrightarrow}   D $, $  \text{B}  \overset{k_3}{\longrightarrow}    \text{E} $, and $ \text{B} + \text{C}  \overset{k_4}{\longrightarrow}   \text{F} $.
The involved species are \text{A}: pyrrole; \text{B}: diketene; \text{C}: 2-acetoacetyl pyyrole; \text{D}: dehyroacetic acid; \text{E}: oligomers; \text{F}: undesired by-product. The material balance equations for the plant read~\cite{CHACHUAT20091557}: 
 \begin{equation}
 \label{eq:diek_mat_balance_eq}
\begin{aligned}
\dot{c}_{A}  &= -k_1 c_{A}c_{B} - \frac{F}{V} c_{A}, \\
\dot{c}_{B} & = -k_1 c_{A}c_{B} - 2k_{2}c_{B}^2 -k_3c_{B} - k_4 c_{B}c_{c} + \frac{F}{V} \left(c_{B}^{in} - c_{B}\right),\\ 
\dot{c}_{C}  &= ~~ k_1 c_{A}c_{B} -k_4 c_{B}c_{C} - \frac{F}{V} c_{C},\\
\dot{c}_{D}  &= ~~ k_{2}c_{B}^2- \frac{F}{V} c_{D}, \quad \dot{V} = F.
\end{aligned}
\end{equation}
%
%
%
It is assumed that the last two reactions are unknown (structural mismatch), thereby leading to the following plant model: $ \text{A} + \text{B}  \overset{k_1}{\longrightarrow}   \text{C} $, $ 2\text{B}  \overset{k_2}{\longrightarrow}   D $.

 We are interested in finding the feed profile of species \text{B} such that it maximizes the amount of \text{C} at final time, while maintaining the concentration of \text{B} and \text{D} below specified threshold values at terminal time. The resulting problem is:
 \begin{align}
 \underset{F(t)}  {\mbox{max}} \;&  J := c_C(t_f)V(t_f) \nonumber \\
\mbox{subject to: } & \text{model equations~\eqref{eq:diek_mat_balance_eq}}, \label{eq:diek_opti} \\
 c_B(t_f) \leq c_{B}^{max} \enspace,\quad  &c_D(t_f) \leq c_{D}^{max} \enspace, \quad 0 \leq F(t) \leq F^{max} \enspace. \nonumber
\end{align}

The optimal input profile is assumed to have three parts: a first arc with $ F = F^{max} $, a second arc where the feeding is between $ 0 $ and $ F^{max} $, and a third arc with zero feeding, see~\cite{CHACHUAT20091557} for details.  Accordingly, we can define three decision variables, namely, $ \pi := (t_m, t_s, F) $, where  $ t_m $ represents the switching time between the first and second arcs,  $ t_s $ the switching time between the second and third arcs, and $ F $ the assumed constant feeding rate during the second arc. We consider two different scenarios as listed in Table~\ref{tab:two_scenarios_CSS}. We compare the three-arc solution with 100 piecewise-constant control parametrization in Figure~\ref{fig:diek_input_comparisons}. Since the two control parameterizations offer similar performance, we use the three-arc parametrization in this work. We reformulate Problem~\eqref{eq:diek_opti} in unconstrained optimization by incorporating the constraints as a penalty term in the cost function. The \gls{gp} learns the mismatch between the plant and model costs. The plant measurements are corrupted with $ 5\% $ zero-mean Gaussian additive noise.

The \gls{gp} is trained using 20 random points around the model optimum for Scenario I. The \gls{gp} is assumed to be unaware of the change from Scenario I to Scenario II at batch/iteration 8. After each step of the trust-region algorithm, a new data point is incorporated in the \gls{gp} if it is sufficiently far from the previous data to avoid overfitting. Parameters of Algorithm~\ref{algo: trust_region_vanilla} are: $ \eta = 0.5 $, $ \gamma_{dec} = 0.9 $, $ \gamma_{inc} =3 $, and $ \Delta_0 = 3.5$.

Plant efficiency deteriorates significantly for Scenario II when the model-based optimized input is applied as shown in Figure~\ref{fig:diek_effi}. Furthermore, although the \gls{gp} model is unaware of the change in scenarios, it quickly learns the new plant operating condition (Figure~\ref{fig:diek_effi}), and significantly outperforms the model-based approach (Figure~\ref{fig:diek_input}).

\begin{table}[!htbp]
	\begin{center}
		\begin{tabular}{@{}llll@{}} \toprule
			Scenario & $ k_3 $ & $ k_4 $ & Batch  \\ \midrule
			Scenario I  & 0.01 & 0.009 & 1-7 \\ 
			Scenario II  & 0.28 & 0.001 & 8-22 \\   \bottomrule
		\end{tabular}
		\caption{Uncertain reaction constants and batch numbers for the two scenarios.}
		\label{tab:two_scenarios_CSS}
	\end{center}
\end{table}

\begin{figure}[!htb]
	\centering
	\	
	\includegraphics[scale=0.4]{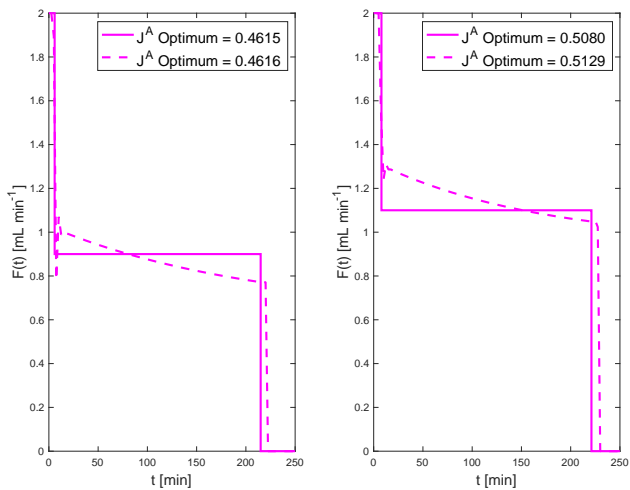}
	\caption{Input profiles for two different control parameterizations} 
	\label{fig:diek_input_comparisons}
	
\end{figure}

\begin{figure}[!htb]
	\centering
	\	
		\includegraphics[scale=0.27]{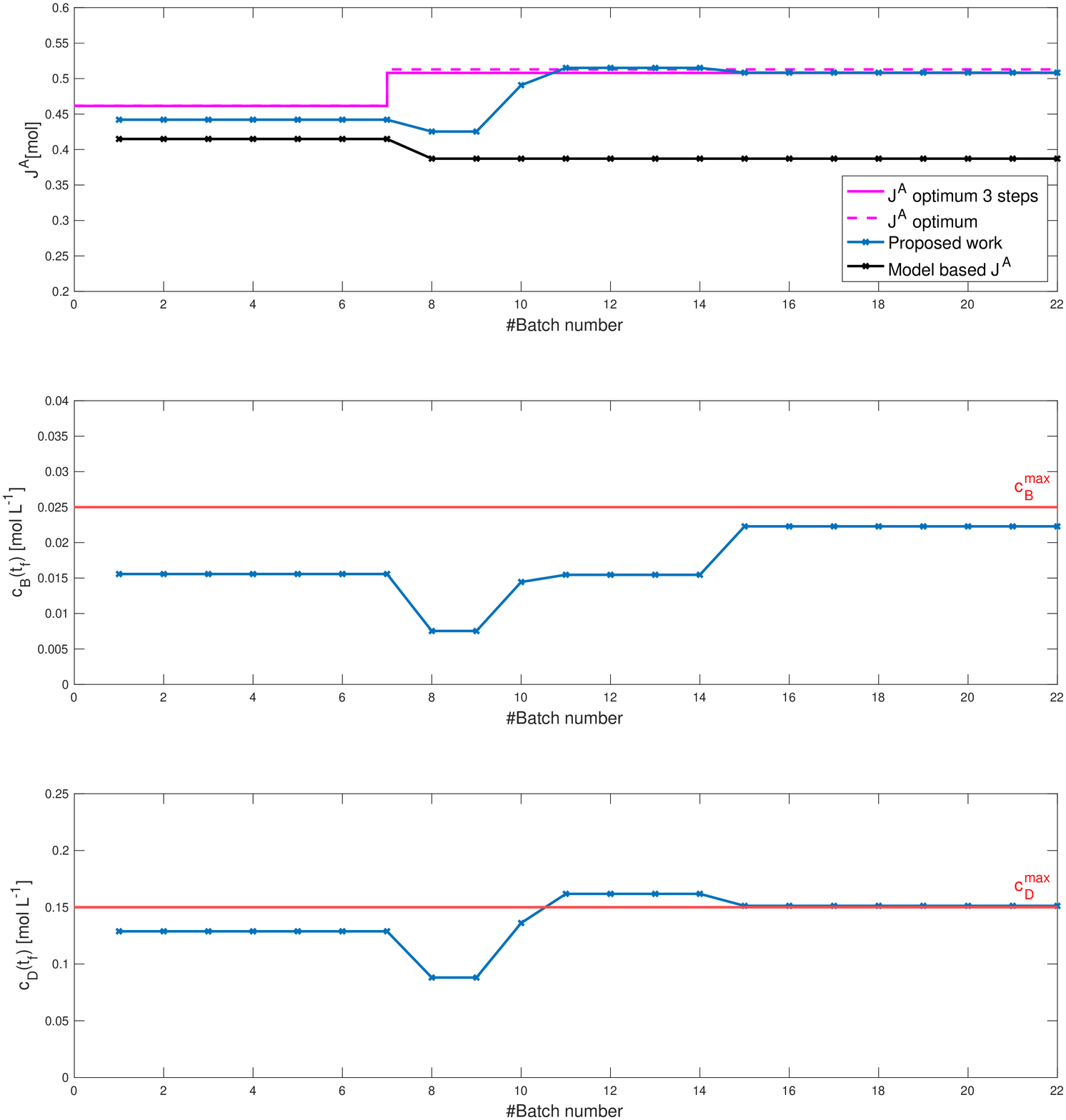}
		\caption{Learning and optimization using \gls{gp} as a surrogate model. The upper plot illustrates the cost profile, while the middle and lower plots show the constraints profiles. \gls{gp} reaches near plant optimality within three  iterations.}
		\label{fig:diek_effi}
	
\end{figure}

\begin{figure}[!htb]
	\centering
	\	
	\includegraphics[scale=0.4]{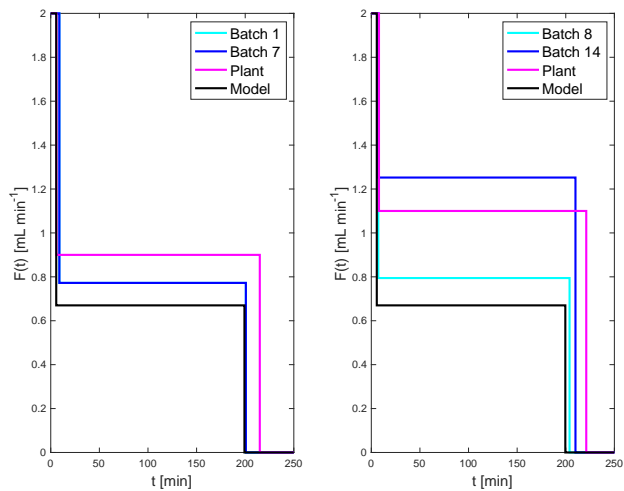}
	\caption{Input profiles for the two scenarios}
	\label{fig:diek_input}
	
\end{figure}

\section{CONCLUSIONS}
\label{sec:conclusions}
This paper has investigated convergence a certificate for stochastic derivative-free trust-region methods based on Gaussian Processes. To the best of our knowledge, this work is the first to show that \glspl{gp} are indeed probabilistic fully-linear models. This in turn allows inferring global convergence of trust-region methods in an  almost surely sense. We have demonstrated the efficacy of  \glspl{gp} as surrogate models, drawing upon repeated open-loop optimal control of a chemical batch reaction process.


 \bibliographystyle{IEEEtran}
\bibliography{references}

\end{document}